\documentclass[runningheads]{llncs}

\usepackage{enumitem}
\usepackage{hyperref}
\usepackage{algorithm}
\usepackage{algorithmic}
\usepackage{xspace}

\usepackage{amsmath,amssymb,dsfont}
\usepackage{tikz}

\usepackage{listings}
\lstdefinelanguage{coq}
{
  morekeywords ={Definition, Lemma, Theorem, forall, exists, Inductive,
    CoInductive, Type, Class, Hypothesis, Fixpoint, Record, if, then, else},
  sensitive=true,
  morecomment =[s]{(*}{*)},
  escapeinside={(@}{@)},
  emph={Prop}, emphstyle=\bf,
}

\lstdefinestyle{coqstyle}{
  language=coq, 
  commentstyle=\sl, 
  keywordstyle=\bf, mathescape=true,
  basicstyle=\footnotesize\tt
}
\lstset{style=coqstyle}

\newcommand{\BFS}{\ensuremath{\mathcal{B\!F\!S}}\xspace}
\newcommand{\Root}{\ensuremath{\boldsymbol{r}}\xspace}

\newcommand{\ie}{\emph{i.e.}}
\newcommand{\eg}{\emph{e.g.}}
\newcommand{\nb}{\emph{n.b.}}

\newcommand{\isdef}{\stackrel{def}=}

\newcommand{\dist}[2]{\mathsf{dist}(#1,#2)}

\newcommand{\maxd}{\max\nolimits_d}
\newcommand{\mind}{\min\nolimits_d}
\newcommand{\sumd}{\mathop{\mathrm{sum}_d}}

\newcommand{\dleq}{ \le_d }
\newcommand{\dtop}[1]{#1^{\top}}
\newcommand{\dbot}[1]{#1^{\bot}}

\newcommand{\crank}[2]{\mathit{rank}_{#1}(#2)}
\newcommand{\csmooth}[2]{\mathit{smooth}_{#1}(#2)}
\newcommand{\ssmooth}[1]{\mathit{smooth}(#1)}

\newcommand{\nsset}[2]{E_{#1,#2}}

\newcommand{\dpot}[1]{\pi_{#1}}

\newcommand{\dstep}{\rightarrow_d}
\newcommand{\dstepstar}{\rightarrow_d^*}

\newcommand{\Step}{\ensuremath{\mathsf{Step}}\xspace}
\newcommand{\RStep}{\Step^{(r)}}
\newcommand{\DStep}{\Step^{(d)}}
\newcommand{\PStep}{\Step^{(p)}}

\newcommand{\Acc}{\mathsf{Acc}}

\newcommand{\WellFounded}{\mathsf{WellFounded}}

\newcommand{\xstep}[1]{\xrightarrow{#1}}

\newcommand{\Nodes}{\ensuremath{\mathit{Nodes}}\xspace}
\newcommand{\Channels}{\ensuremath{\mathit{Channels}}\xspace}
\newcommand{\States}{\ensuremath{\mathit{States}}\xspace}
\newcommand{\nodes}{\Nodes}
\newcommand{\channels}{\Channels}
\newcommand{\Env}{\ensuremath{\Gamma}\xspace}
\newcommand{\Edges}{\ensuremath{\mathit{Edges}}\xspace}
\newcommand{\edges}{\Edges}
\renewcommand{\r}{\Root}

\begin{document}

\title{Revisited Convergence of Dolev \emph{et al}’s BFS Spanning Tree
  Algorithm}

\author{Karine Altisen\orcidID{0000-0001-8344-1853} \and
  Marius Bozga\orcidID{0000-0003-4412-5684}}

\authorrunning{K. Altisen, M. Bozga}

\institute{Univ. Grenoble Alpes, CNRS, Grenoble
  INP\footnote{Institute of Engineering Univ. Grenoble Alpes},
  VERIMAG, 38000 Grenoble, France
  \email{\{Karine.Altisen,Marius.Bozga\}@univ-grenoble-alpes.fr}\\
  \url{http://www-verimag.imag.fr/}}

\maketitle

\begin{abstract}
  We provide a constructive proof for the convergence of Dolev \emph{et
al}'s BFS spanning tree algorithm running under the general assumption
of an unfair daemon.  Already known proofs of this algorithm are
either using non-constructive principles (e.g., proofs by
contradiction) or are restricted to less general execution daemons
(e.g., weakly fair).  In this work, we address these limitations by
defining the well-founded orders and potential functions ensuring
convergence in the general case.  The proof has been fully formalized
in PADEC, a Coq-based framework for certification of
self-stabilization algorithm.

  \keywords{spanning tree algorithm, 
    self-stabilization, constructive proof, proof
    assistant, Coq}
\end{abstract}

\section{Introduction}
\label{sec:introduction}
To obtain properties about distributed systems is a difficult task.
Indeed, such systems generally involve an arbitrary number of
participants, interconnected and communicating according to arbitrary
or specific topologies.  Their execution could be suject to various
assumptions about the degree of asynchronism between participants.
The context in which those systems are considered is increasingly
complex, for example, large scale distributed systems, made of
hererogenous devices, running in higly dynamic networks. Above all,
every nowadays distributed system requires some fault tolerance
properties which are particularly difficult to handle.  Therefore, the
proofs for distributed algorithms quickly become complex and may lead
to errors \cite{Lamport2012}.

In this context, the common practice for establishing the correctness
of distributed algorithms was to provide proofs on paper. But,
computer-aided validation tools are being developped as an answer to
improve those practices.  In particular, many approaches are based on
model-checking (see \eg, \cite{DBLP:journals/sttt/BertrandKLW21},
\cite{Tsuchiya01}) or controller synthesis (see \eg,
\cite{Ebnenasir22}, \cite{DBLP:journals/dc/VolkBKA22}).  However,
those methods usually face the huge execution space, restricting
their applicability. Furthermore, they either require to completely
define the full context of execution (such as the number of
participants, the topology of the network, etc) or to provide
parametric solutions which are often restricted due to undecidability
limitations \cite{DBLP:series/synthesis/2015Bloem}.

On the other hand, the use of a proof assistant such as Coq \cite{coq}
allows to formalize the proofs and automatically check (a.k.a certify)
their correctness for a given distributed algorithm, for every value
of parameters such as the topology or the number of participants.
Note that a proof assistant is not meant to derive any proof: the
proof designer comes with (at least) a sketch of proof and the tool is
used as a guidance to enumerate cases, avoid flaws and clarify the
assumptions. Once completed, the proof has been fully checked by the
tool and its soundness is guaranteed. Many frameworks have been
developped to certify the proof of some distributed algorithms using
various proof assistant such as Coq \cite{ACD7,Courtieu02,pactole},
TLA+ \cite{CDLMRV12c,tla} or Isabel/HOL \cite{berni,JM05tr,KNR12}.

We focus here on distributed self-stabilizing
algorithms. Self-stabilization \cite{D74j} is a lightweight fault
tolerance property to withstand transient faults: once such faults hit
a self-stabilizing system, this one is guaranteed to recover a correct
behavior within finite time. Note that no assumption is made on the
nature of the transient faults (memory corruption, topology changes,
etc). But, once those faults cease, there is a finite period - the
stabilization time - during which the system may misbehave (notably
its safety guaranties are no longer ensured during this recevory
period).  In this paper, we are interested in the proof of
self-stabilization of a BFS spanning tree algorithm by Dolev \emph{et
al} \cite{DIM93}.

\paragraph{Related Work}

The correctness of several non fault-tolerant distributed algorithms
have been certified, (see \eg, \cite{CF11j,HESSELINK20131622}).
Certification of fault-tolerant, yet non self-stabilizing, distributed
systems has been addressed using various proof assistants, {\em e.g},
in Isabel/HOL \cite{CM09j,JM05tr,berni,KNR12}, TLA+
\cite{CDLMRV12c,DFGL13c}, Coq \cite{RVVV18c}, NuPRL~\cite{RGBC17j}.
This approach is called {\em robust} fault tolerance; it masks the
effect of the faults (whereas self-stabilization is non-masking by
essence). In the robust approach, many results are related to
agreement problems, such as consensus or state-machine replication, in
fully connected networks; and many works only certify the safety
property of the considered
problem (see \eg, \cite{CDLMRV12c,DFGL13c,RGBC17j,RVVV18c}).
However, both liveness and safety properties are certified
in \cite{CM09j,KNR12,berni}.
Finally, robust fault tolerance has been also considered in the
context of mobile robot computing: using the PACTOLE Coq framework,
impossibility results for swarms of robots that are subjected to
Byzantine faults have been certified \cite{bouzid13sss,CRTU15}.

Several frameworks to certify self-stabilizing algorithms using the
Coq proof assistant have been proposed, \eg, \cite{Courtieu02,ACD7}.
In particular, the PADEC Coq library provides a framework to develop
proofs of self-stabilizing algorithms written in the atomic state
model \cite{D74j}, and allows many various assumptions defined in
the litterature. For instance, the asynchronism of the system can be
defined using several levels of fairness.
Notably, it includes support and use cases that prove the composition
of self-stabilizing algorithms \cite{DBLP:conf/forte/AltisenCD19},
their time complexity in steps \cite{ACD21c} and
rounds \cite{AltisenCD23} (Rounds provide a measure of time taking
into account the parallelism of the system whereas steps provide a
sequential measure).

The research on proving termination (\ie\ convergence) of distributed
algorithms is extremely vast. Usually, formal techniques such as
ranking/potential functions (\eg, \cite{ACD21c}), well-founded orders
(\eg, \cite{ACD7}) provide direct means to prove the termination
over all executions.  Alternatively, by the principle of the excluded
middle, termination follows from a proof of the absence of
non-terminating runs.  That is, proofs of termination by contradiction
usually exhibit contradictions in the case non-terminating executions
are presumed possible.  The principle of the excluded middle is,
however, not allowed in constructive proofs \eg, based on
intuitionistic logics and hence impossible to use in constructive
proof assistants such as Coq \cite{coqart}
(unless changing the basic set of axioms).

\paragraph{Contributions}
In this paper, we revisit the proof of convergence of the
self-stabilizing Dolev \emph{et al} BFS spanning tree
algorithm \cite{DIM93}.  The first proof of convergence of this
algorithm has been provided in \cite{DIM93}, that is, the same paper
where the algorithm has been introduced; but this proof is restricted
to some restricted fairness assumptions.  Since then, other proofs
have been proposed.
As \cite{DIM93}, \cite{thebook} proves the self-stabilization of the
algorithm under the mild assumption of a weakly fair daemon, that is,
by restricting the asynchronism along the executions of the algorithm;
this proof has been formalized, developped and mechanically checked in
PADEC \cite{AltisenCD23}.
While relaxing the fairness assumption,
\cite{thebook} also provides a non-constructive proof, by
contradiction, working under the explicit assumption that the diameter
of the graph is a priori known and used as a bound on some of the
variables of the algorithm.

In this paper, still with no fairness assumption, we provide, a
contrario, a constructive proof of the result which has been fully
formalized and mechanically checked in PADEC.
Our contribution is twofold:
\begin{itemize}
\item We provide the first constructive proof of the convergence of
  the Dolev \emph{et al} BFS Spanning Tree algorithm under the most
  general execution assumptions (\ie, unfair daemon, unbounded
  variables). The proof exploits a novel potential function, allowing
  for a finer comprehension of the system executions towards
  convergence.
\item The convergence proof has been fully formalized and
  automatically checked using the PADEC framework.  The result can be
  therefore fully trusted and moreover, illustrates the capabilities
  of the PADEC framework to formally handle distributed algorithms and
  their properties.
\end{itemize}

\paragraph{Coq Development.}  The development for this contribution
represents about 5,155 lines of Coq code (\lstinline|#loc|, as
measured by \lstinline|coqwc|), precisely
\lstinline|#loc: spec = 1,111;|
\lstinline|proof = 3,662;| \lstinline|comments = 382|.  It
 is available as an online browsing documentation at
\url{http://www-verimag.imag.fr/~altisen/PADEC}. We encourage the
reader to visit this web-page for a deeper understanding of our work.

\paragraph{Organization.} The paper is organized as follows.
Section \ref{sec:dolev} recalls the Dolev \emph{et al} algorithm for
the construction of BFS spanning trees.  Section \ref{sec:padec}
provides a brief overview of the PADEC framework and the formal
encoding of the above-mentioned algorithm and its relevant properties.
Section \ref{sec:proof} provides the proof of convergence.  Section
\ref{sec:dstep-potential} elaborates on the definition of the
potential function over system configurations, that is, the key
ingredient ensuring that the proof is constructive and therefore
representable in PADEC.  Finally, Section \ref{sec:conclusion}
concludes and provides directions for future work.

\section{Dolev \emph{et al}'s BFS Spanning Tree Algorithm}
\label{sec:dolev}
The Dolev \emph{et al}'s BFS algorithm~\cite{DIM93} is a
self-stabilizing distributed algorithm that computes a BFS spanning
tree in an arbitrary rooted, connected, and bidirectional network.  By
``bidirectional'', we mean that each node can both transmit and
acquire information from its adjacent nodes in the network topology,
\ie, its neighbors.  The algorithm being distributed, these are the
only possible direct communications.  ``Rooted'' indicates that a
particular node, called the root and denoted by \Root, is
distinguished in the network. As in the present case, algorithms for
rooted networks are usually semi-anonymous: all nodes have the same
code except the root.

This algorithm was initially written in the Read/Write atomicity
model. We study, here, a straightforward translation into the
\emph{atomic-state model}, denoted hereafter by \BFS, and presented as
Algorithm~\ref{alg}. Notice that, as in the original presentation
\cite{DIM93} and contrarily to other adapations (see \eg,
\cite{thebook}), the variables are not assumed to be bounded.

\begin{algorithm}[htp]
  
  \textbf{Constant Local Inputs:} \hfill\
  
  \begin{tabular}{l}
    $p.\mathit{neighbors} \subseteq \channels$; $p.root \in\{true, false\}$ \\
     \emph{/* $p.\mathit{neighbors}$, as other sets below, are implemented as lists */}
  \end{tabular} \hfill\ 

\smallskip
  
  \textbf{Local Variables:} \hfill\
  
  \begin{tabular}{l}
    $p.d \in \mathds N$; $p.par \in \channels$
  \end{tabular} \hfill\ 

\smallskip
  
  \textbf{Macros:} \hfill\
  
  \begin{tabular}{l}
    $Dist_p = \min \{ q.d + 1, q \in p.\mathit{neighbors} \}$ \\
    $Par_{dist}$ returns the first channel in the list $\{ q \in p.\mathit{neighbors},
    q.d + 1 = p.d \}$
  \end{tabular} \hfill\ 

  \smallskip
  
  \textbf{Action for the root, \ie, for $p$ such that $p.root = true$} \hfill\ 

  \begin{tabular}{ll}
    Action $Root$: & \textbf{if} $p.d \neq 0$ \textbf{then} $p.d := 0$
 \end{tabular} \hfill\ 

  \smallskip
  
  \textbf{Actions for any non-root node, \ie, for $p$ such that $p.root = false$} \hfill\ 
  
  \begin{tabular}{ll}
   Action $CD$:
    & \textbf{if} $p.d \neq Dist_{p}$ \textbf{then} $p.d := Dist_{p}$ \\
     Action $CP$:
    &  \textbf{if} $p.d = Dist_p$ and $p.par.d + 1 \neq p.d$ \textbf{then} $p.par := Par_{dist}$ 
  \end{tabular}  \hfill\ 

   \caption{Algorithm \BFS, code for each node $p$.}
  \label{alg}
\end{algorithm}

In the \emph{atomic-state model}, nodes communicate through locally
shared variables: a node can read its variables and the ones of its
neighbors, but can only write to its own variables. Every node can
access the variables of its neighbors through local channels, denoted
by the set $\channels$ in Algorithm~\ref{alg}.
The network is locally defined at each node $p$ using constant local
inputs.  The fact that the network is rooted is implemented using a
constant Boolean input called $p.root$ which is false for every node
except \Root. The input $p.\mathit{neighbors}$ is the set of channels linking
$p$ to its neighbors.  When it is clear from the context, we do not
distinguish a neighbor from the channels to that neighbor.

The code of Algorithm~\ref{alg} is given as three
locally-mutually-exclusive actions written
as: \textbf{if} \emph{condition} \textbf{then} \emph{statement}. We
say that an action is \emph{enabled} when its condition is true. By
extension, a node is said to be enabled when at least one of its
actions is enabled.
According to the algorithm, the \emph{semantics of the system} defines an
execution as follows.  The system
current \emph{configuration} is given by the current value of all
variables at each node.  If no node is enabled in the current
configuration, then the configuration is said to be \emph{terminal}
and the execution is over.  Otherwise, a \emph{step} is performed:
a \emph{daemon} (an oracle that models the asynchronism of the
system) \emph{activates} a non-empty set of enabled nodes.  Each
activated node then \emph{atomically executes} the statement of its
enabled action, leading the system to a new configuration.

Assumptions can be made about the daemon. Here, we consider the most
general asynchrony assumption, namely the \emph{unfair} daemon,
meaning that it can choose any non-empty subset of the enabled nodes
for execution. In contrast, \emph{fair} daemons would guarantee
additional properties.  For example, a
\emph{strongly} (resp. \emph{weakly}) \emph{fair} daemon ensures that
every node that is enabled infinitely (resp. continuously) often is
eventually chosen for execution by the daemon.

In Algorithm \BFS, each node $p$ maintains two variables. First, it
evaluates in $p.d$ its distance to the root. Then, it maintains
 $p.par$ as a pointer to its \emph{parent} in the tree under
construction: $p.par$ is assigned to a neighbor that is closest to the
root (\nb, \Root.$par$ is meaningless).
Algorithm \BFS is a self-stabilizing BFS spanning tree construction in
the sense that, regardless the initial configuration, it makes the
system converge to a terminal configuration where $par$-variables
describe a BFS spanning tree rooted at \Root.
To that goal, nodes first compute into their $d$-variable their distance
to the root. The root simply forces the value of \Root.$d$ to be 0;
see  Action $Root$. Then, the $d$-variables of other nodes
are gradually corrected: every non-root node $p$ maintains $p.d$ to
be the minimum value of the $d$-variables of its neighbors incremented
by one; see $Dist_{p}$ and Action $CD$.
In parallel, each non-root node $p$ chooses as parent a neighbor $q$
such that $q.d = p.d-1$ when $p.d$ is locally correct \ie, $p.d =
Dist_{p}$) but $p.par$ is not correctly assigned \ie, $p.par.d$ is not
equal to $p.d-1$); see Action $CP$.

\section{The PADEC Framework}
\label{sec:padec}
PADEC~\cite{ACD7} is a general framework, written in
Coq \cite{coqart}, to develop mechanically checked proofs of
self-stabilizing algorithms.  It includes the definition of the
atomic-state model and its semantics, tools for the definition of the
algorithms and their properties, lemmas for common proof patterns, and
case studies.  Definitions in PADEC are designed to be as close as
possible to the standard usage of the self-stabilizing community.
Moreover, it is made general enough to encompass many usual hypothesis
(\eg, about topologies or daemons).

In PADEC, the finite network is described using types \Nodes
and \Channels, 
which respectively represent the nodes and the links between nodes.
The distributed algorithm is defined by providing a local
algorithm at each node. This latter is defined using a type
\States 
that represents the local state of a node
\ie, the values of its local variables and a function $\mathit{run}$
that encodes the local algorithm itself and computes a new state
depending on the current state of the node and that of its neighbors.

The model semantics defines a \emph{configuration} as a function
from \Nodes to \States that provides the local state of each node.
The type of a configuration is given by
$\Env \isdef \Nodes \rightarrow \States$.  An \emph{atomic step} of
the distributed algorithm is encoded as a binary relation over
configurations, denoted by $\Step \subseteq \Env \times \Env$, that
checks the conditions given in the informal model; see
Section~\ref{sec:dolev}.  An \emph{execution} $e$
is a finite or infinite stream of configurations, which models a
\emph{maximal} sequence of configurations where any two consecutive
configurations are linked by the $\Step$ relation.  ``Maximal'' means
that $e$ is finite if and only if its last configuration is
terminal. We use the coinductive\footnote{Coinduction allows to define
and reason about potentially infinite objects.}  type $\mathit{Exec}$
to represent an execution stream along with a coinductive predicate
$\mathit{isExec}$
to check the above condition.
Daemons are also defined as predicates over executions (in the case of
the unfair daemon, this predicate is simply equal
to $\mathit{true}$).

Self-stabilization in PADEC is defined according to the usual
practice: the property is formalized as a predicate
$(\mathit{selfStabilization} \;\; \mathit{SPEC})$
where $\mathit{SPEC}$ is a predicate over executions
and models the specification of the algorithm.  An algorithm
is \emph{self-stabilizing w.r.t. the specification}
$\mathit{SPEC}$ if there exists a set of legitimate configurations
that satisfies the following three properties in every
execution $e$:
\begin{itemize}
\item \underline{\emph{Closure}}:
  if $e$ starts in a legitimate configuration then $e$ only contains
  legitimate configurations;
\item \underline{\emph{Convergence}}:
  $e$ eventually reaches a legitimate configuration; and
\item \underline{\emph{Specification}}:
  if $e$ starts in a legitimate configuration then $e$ satisfies the
  intended specification w.r.t. $\mathit{SPEC}$.
\end{itemize}
An algorithm is said to be \emph{silent} when each of its executions
eventually reaches a terminal configuration; in such a case, the set
of legitimate configurations can be chosen as the set of terminal
configurations.  The closure, convergence, and silent properties are
expressed using Linear Time Logic operators provided in the PADEC
library.

\subsection*{The \BFS Algorithm in PADEC}

For the \BFS Algorithm and its specification, we use the formal encoding
provided in \cite{AltisenCD23}; in particular, the algorithm is a
straightforward faithful translation in Coq of
Algorithm \ref{alg}. Notably, an element of \States,
namely a state of a given node, is a tuple
$(d, \mathit{par}, \mathit{root}, \mathit{neighbors})$ representing
the variables of the node as in Algorithm~\ref{alg}.

As the constant variables $\mathit{root}$ and $\mathit{neighbors}$
represent the network, the assumptions that this network is rooted,
bidirected and connected is encoded in a predicate on a configuration
using only those variables. This predicate, in particular uses the set
of edges of the network $\Edges \isdef \{ (p, q) \;|\; p,
q \in \Nodes \;\wedge\; (p \in q.\mathit{neighbors} \;\vee\; q \in
p.\mathit{neighbors}) \}$. Globally in this precidate, the neighbor
links represent a bidirected connected graph and the Boolean
$\mathit{root}$ should be true for a unique node.
We will assume moreover that this predicate holds for any
configuration, even if this is no more mentionned in the sequel.

In \cite{AltisenCD23}, the \BFS Algorithm was proven using PADEC to be
self-stabilizing and silent for the specification of a BFS spanning
tree, \emph{under the assumption of a weakly fair daemon}.  We extend
here this result to the \emph{unfair daemon}.  Note that, since \BFS
is silent, the properties of closure and specification still hold,
henceforth, relaxing the assumption from a weakly fair to an unfair
daemon is trivial.  The only missing property is the convergence.  The
rest of the paper is therefore focusing on proving the convergence of
the \BFS Algorithm under an unfair daemon in PADEC, \ie, providing a
constructive proof under the form of a potential function and its
corresponding order.

\section{Overview of the Proof}
\label{sec:proof}
An execution is fully defined by the \Step relation, as the unfair
daemon does not add any other restriction. As a consequence, the
convergence property can be expressed by the fact that the \Step
relation is well-founded:
\begin{equation}\label{ass:well-founded:step}
  \WellFounded~ \Step
\end{equation}
This means that any execution $e \isdef \gamma_0 \xstep{\Step} \gamma_1
\xstep{\Step} ...$ is finite.  $\WellFounded$ comes from the Coq
standard library where it is expressed as $(\WellFounded~ R) \isdef
(\forall x. ~\Acc~ R~ x)$ for a given relation $R$. $\Acc$ is an
inductive predicate from the Coq standard library as well, and $(\Acc~
R~ x)$ means that every sequence of elements starting from $x$ and
linked by $R$ is finite.

In the following, we will prove the assertion
(\ref{ass:well-founded:step}).  To this end, we will consider the
partitioning of the $\Step$ relation as \( \RStep \cup \DStep \cup
\PStep \) denoting respectively \emph{root steps}, \emph{d-steps}
and \emph{par-steps} defined as follows:
\begin{itemize}
\item $\RStep$ holds for any step $\gamma \xstep{\Step} \gamma'$ in which the
  root executes \ie, such that $\gamma.\r.d \neq \gamma'.\r.d$. Note
  that any subset of non-root nodes may also execute during this step.
\item $\DStep$ holds for any step $\gamma \xstep{\Step} \gamma'$ in which the
  root does not execute and at least one non-root node executes a
  $CD$-action \ie, $\gamma.\r.d = \gamma'.\r.d$ and $\exists
  p.~ \gamma.p.d \neq \gamma'.p.d$.  Note that any subset of non-root
  nodes may also execute either Action $CD$ or $CP$.
\item $\PStep$ holds for any step $\gamma \xstep{\Step} \gamma'$ where $d$
  variables are left unchanged \ie, $\forall p.~ \gamma.p.d
  = \gamma'.p.d$. This implies that a node which executes is not the
  root and executes its $CP$-action.
\end{itemize}
In addition, we will use the following general result, (developped as
a tool in PADEC), which gives sufficient conditions ensuring the union
of two relations is well-founded. This tool has first been developped
in PADEC for algorithms with prioritized rules (as Actions $CD$ and
$CP$) and has been enhanced for this proof.
\begin{proposition}\label{prop:well-founded}
  Let $R_1, R_2$ be relations, $x$ an element. Assume that (1)
  $R_2$ is well-founded and (2) there exist a set $B_1$ and relations
  $R_1'$ well-founded, $E_1$ transitive such that
\begin{enumerate}[label=(2.\roman*)]
\item $x \in B_1$ and for all elements $a$, $b$ if $a \in B_1$ and
  $a \xstep{R_1 \cup R_2} b$ then $b \in B_1$,
\item for all elements $a$, $b$ if $a \in B_1$ and $a \xstep{R_1}
  b$ then $a \xstep{R_1'} b$,
\item for all elements $a$, $b$, $c$, if $a \xstep{R'_1} b$ and $b
  \xstep{E_1} c$ then $a \xstep{R'_1} c$,
\item for all elements $a$, $b$ if $a \xstep{R_2} b$ then
  $a \xstep{E_1} b$.
\end{enumerate}
We can conclude that $(\Acc~ (R_1 \cup R_2)~ x)$ holds.
\end{proposition}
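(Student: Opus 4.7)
The plan is to establish $\Acc\,(R_1 \cup R_2)\,x$ by nested well-founded induction: the outer induction exploits the well-foundedness of $R_1'$, the inner one that of $R_2$. The intuition is standard: every $R_1$-transition makes strict progress in $R_1'$ (by (2.ii), applicable because closure (2.i) keeps us inside $B_1$), and between two consecutive $R_1$-transitions only finitely many $R_2$-steps can be performed before either another $R_1$ fires or the execution terminates.

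Concretely, I would first apply well-founded induction on $R_1'$ to reduce the goal to the following: for every $a \in B_1$, given the outer hypothesis $\mathit{IH}_\mathit{out}$ that $\Acc\,(R_1 \cup R_2)\,b$ holds whenever $a \xstep{R_1'} b$ and $b \in B_1$, establish $\Acc\,(R_1 \cup R_2)\,a$. To prove this, I would then apply an inner well-founded induction on $R_2$, but to a strengthened goal: not just $\Acc\,(R_1 \cup R_2)\,a$, but the same conclusion for every $a' \in B_1$ that is $E_1$-reachable from (or equal to) $a$. This strengthening is precisely what conditions (2.iii) and (2.iv) are designed to enable: by (2.iv) every $R_2$-step lies in $E_1$, and transitivity of $E_1$ keeps the $E_1$-closure stable under further $R_2$-steps; by (2.iii) the $R_1'$-successors reached from inside this closure remain targets of $\mathit{IH}_\mathit{out}$.

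In the step case of the inner induction, a transition $a' \xstep{R_1 \cup R_2} b$ is dispatched by cases. An $R_1$-transition promotes to $a' \xstep{R_1'} b$ by (2.ii); combining the invariant that relates $a'$ back to $a$ via $E_1$ with (2.iii) places $b$ among the elements covered by $\mathit{IH}_\mathit{out}$, which then delivers $\Acc\,(R_1 \cup R_2)\,b$. An $R_2$-transition promotes to $a' \xstep{E_1} b$ by (2.iv); by transitivity of $E_1$ and closure (2.i), $b$ lies in $B_1$ and is still $E_1$-related to $a$, so the inner induction hypothesis applies directly to $b$.

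The main obstacle I anticipate is pinning down the exact formulation of the strengthened inner invariant — it must be strong enough to invoke $\mathit{IH}_\mathit{out}$ in the $R_1$-branch through (2.iii), yet preserved by $R_2$-transitions via (2.iv) and the transitivity of $E_1$. Once that invariant is identified, the remaining arguments are essentially book-keeping, and the proof should translate into Coq by two nested applications of the $\Acc$ eliminator (equivalently, the $\mathtt{Fix}$ combinator).
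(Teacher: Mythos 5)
Your architecture---a nested well-founded induction, outer on $R_1'$ and inner on $R_2$, with an $E_1$-based strengthening of the inner goal---is precisely the constructive unfolding of the lexicographic order on pairs that the paper uses, so the overall plan is sound. However, the invariant you chose breaks in the $R_1$-branch of the inner induction, and this is a genuine gap. You relate the current element $a'$ to the outer variable $a$ by ``$a'$ is $E_1$-reachable from (or equal to) $a$'', and when $a' \xstep{R_1} b$, hence $a' \xstep{R_1'} b$ by (2.ii), you claim that (2.iii) ``places $b$ among the elements covered by $\mathit{IH}_\mathit{out}$'', i.e.\ yields $a \xstep{R_1'} b$. It does not: (2.iii) absorbs an $E_1$-step occurring \emph{after} an $R_1'$-step (from $a \xstep{R'_1} b$ and $b \xstep{E_1} c$ conclude $a \xstep{R'_1} c$), whereas you need to absorb an $E_1$-step occurring \emph{before} one (from $a \xstep{E_1} a'$ and $a' \xstep{R'_1} b$ conclude $a \xstep{R'_1} b$). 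That pre-composition property is not a hypothesis, and $\mathit{IH}_\mathit{out}$ says nothing about $R_1'$-successors of $a'$ when $a' \neq a$, so this case cannot be closed.

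The repair is to orient the invariant the way (2.iii) is stated: anchor the outer induction at the \emph{source of the most recent $R_1$-step} and maintain ``$c \xstep{R'_1} a'$'' (not ``$a'$ is $E_1$-reachable from $c$'') as the link between the outer variable $c$ and the current element $a'$. Concretely, prove by outer induction on $R_1'$ and inner induction on $R_2$ the statement $U(c) \isdef \forall a'.\ (a' \in B_1 \wedge c \xstep{R'_1} a') \Rightarrow \Acc\,(R_1 \cup R_2)\,a'$: an $R_2$-step $a' \xstep{R_2} b$ preserves the link by (2.iv) then (2.iii) ($c \xstep{R'_1} a'$ and $a' \xstep{E_1} b$ give $c \xstep{R'_1} b$), and an $R_1$-step lets you invoke the outer hypothesis at $a'$ itself, which is an $R_1'$-successor of $c$. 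The prefix of the execution before any $R_1$-step fires, where no anchor exists yet, then needs one additional outermost induction on $R_2$ alone to derive $\Acc\,(R_1\cup R_2)\,x$ from $U$. This is exactly what the paper's single well-founded order $<_{lex}$ on pairs $(a,a)$ packages in one step, with $E_1$ playing the role of equality on the first coordinate.
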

\begin{proof}
  Intuitively, the set $B_1$ represents an over-approximation of the
  elements reachable from $x$ through $R_1 \cup R_2$
  (\textit{2.i}). The relation $R'_1$ represents an abstraction of the
  relation $R_1$ when restricted to the set $B_1$ (\textit{2.ii}).
  The relation $E_1$ can be understood as an equality with respect to
  $R_1'$ (\textit{2.iii}) which moreover abstracts the relation $R_2$
  (\textit{2.iv}).  Usually, $R_1'$ and $E_1$ can be derived from a
  potential function on the set of elements and its induced partial
  order and equality.
  
  The result is then directly obtained by considering the relation
  $<_{lex}$ defined on pairs of elements by
  $$(a,b) <_{lex} (c,d) \isdef a \xstep{R'_1} c \mbox{ or } (a
  \xstep{E_1} c \mbox{ and } b \xstep{R_2} d)$$ and the key
  observations that:
\begin{itemize}
\item $<_{lex}$ is a well-founded lexicographic order since $R'_1$ and
  $R_2$ are well-founded;
\item for all elements $a$, $b$, if $a\in B_1$ and $a \xstep{R_1 \cup
  R_2} b$ then $(b, b) <_{lex} (a, a)$. \qed
\end{itemize}
\end{proof}
As a corollary, using the same notations as in
Proposition~\ref{prop:well-founded}, being given two relations $R_1$
and $R_2$, if for every $x$ we can effectively construct the set $B_1$
and the relations $R'_1$, $E_1$ such that all assumptions are met
(\textit{1}, \textit{2.i} to \textit{2.iv}), then we can conclude that
$R_1 \cup R_2$ is well-founded.

We now proceed to the core of the convergence proof and show
progressively that $\PStep$, $\DStep \cup \PStep$ and $\RStep \cup
\DStep \cup \PStep$ are well-founded.

\begin{lemma}\label{lemma:p-steps:wf}
  $\WellFounded~ \PStep$.
\end{lemma}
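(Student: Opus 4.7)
The plan is to equip $\PStep$ with a natural-valued potential function that strictly decreases at every transition, reducing well-foundedness of $\PStep$ to that of $<$ on $\mathbb{N}$. Concretely, I would define
\[
  N(\gamma) \isdef \bigl|\{ p \in \Nodes \mid p.\mathit{root} = \mathit{false},\ \gamma.p.d = Dist_p,\ \gamma.p.par.d + 1 \neq \gamma.p.d \}\bigr|,
\]
which counts the nodes enabled for Action $CP$ in configuration $\gamma$. The core lemma to prove is: whenever $\gamma \xstep{\PStep} \gamma'$, we have $N(\gamma') < N(\gamma)$. Well-foundedness of $\PStep$ then follows immediately by pulling back $<$ on $\mathbb{N}$ along $N$.

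The key observation is that by definition of $\PStep$ every $d$-variable is frozen. Hence for any node $p$, both $\gamma.p.d$ and $Dist_p$ are preserved in $\gamma'$, and for every neighbor $q$ the predicate $q.d + 1 = p.d$ has the same truth value. Membership of $p$ in the set counted by $N$ therefore depends only on $p$'s own $par$ variable, so other nodes' $par$-updates cannot affect it. Now for any node $p_0$ that actually fires during the step, it must execute Action $CP$ (Actions $Root$ and $CD$ are excluded, since $d$-values do not change), so $p_0$ was in the set counted by $N(\gamma)$. After firing, $p_0.par$ is assigned to $Par_{dist}$, which is well-defined because $p_0.d = Dist_{p_0}$ ensures the underlying set of neighbors is non-empty; consequently $\gamma'.p_0.par.d + 1 = \gamma'.p_0.d$, so $p_0$ leaves the set. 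Any node that does not fire keeps its $par$ and hence its membership status.

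Combining these points with the fact that a $\PStep$ activates at least one node yields $N(\gamma') \le N(\gamma) - 1$, i.e. strict decrease. I expect the main obstacle to be purely semantic bookkeeping in Coq: unfolding the definition of $\PStep$ to formalise (i) that every firing non-root node is a $CP$-firing and is therefore already counted by $N$, and (ii) that no node can be newly enabled for $CP$ after the step, which itself reduces to the frozen-$d$ invariant. Once the decrease lemma is in hand, one concludes either via a standard PADEC library lemma lifting well-foundedness of $<$ on $\mathbb{N}$ through $N$, or by a direct $\Acc$-induction on $N(\gamma)$; alternatively, the result also falls out of Proposition~\ref{prop:well-founded} by instantiating $R_1 = \emptyset$ and $R_2 = \PStep$.
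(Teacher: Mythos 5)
Your proposal is correct and follows essentially the same route as the paper: the paper's proof also uses the count of $CP$-enabled nodes as a potential (denoted $\#CP(\gamma)$ there), argues that frozen $d$-values prevent any node from becoming newly enabled while every activated node becomes disabled, and concludes well-foundedness from the strict decrease of a natural-valued function. Your version merely spells out the same bookkeeping in more detail (e.g., the non-emptiness of the set defining $Par_{dist}$), so no further comparison is needed.
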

\begin{proof}
  We use the potential function denoted $\#CP(\gamma)$ which counts for a given
  configuration $\gamma$ the number of nodes for which the guard of
  their $CP$-action is enabled. We say that such a node is $CP$-enabled,
  otherwise it is $CP$-disabled.

  Considering $par$-steps only \ie, the values of $d$-variables are left
  unchanged then either (i) a node is $CP$-disabled and will remain so
  or (ii) it is $CP$-enabled and if it executes, it becomes $CP$-disabled,
  else it remains $CP$-enabled.  Hence, for every two configurations
  $\gamma$ and $\gamma'$ such that $\gamma \xstep{\PStep} \gamma'$
  we have $\#CP(\gamma') < \#CP(\gamma)$, namely $\#CP$ is
  decreasing. As $\#CP$ is obviously lower-bounded by 0, this ensures that
  $\PStep$ is well-founded. \qed
\end{proof}

To proceed on the next phase of the convergence proof, we will use a
result about executions consisting of $d$-steps only.  The next
proposition guarantees the well-foundedness of the relation $\DStep$
through the existence of an effectively constructive potential
function and its ordering:
\begin{quote}
  \begin{proposition}\label{prop:dstep-potential}
    Given a configuration $\gamma_0$, we can effectively construct:
    \begin{enumerate}[label=(\alph*)]
    \item a set of configurations $B(\gamma_0)$ containing $\gamma_0$
      and closed by taking d- or par-steps,
    \item a potential function on configurations from \Env to a
      domain $D(\gamma_0)$, $\dpot{\gamma_0} :
      \Env \rightarrow D(\gamma_0)$, independent on
      \textit{par}-variables, and
    \item a well-founded order $\prec_d$ on $D(\gamma_0)$,
      such that for all $\gamma \in
      B(\gamma_0)$ and $\gamma \xstep{\DStep} \gamma'$, it holds that
      $\dpot{\gamma_0}(\gamma') \prec_d \dpot{\gamma_0}(\gamma)$.
      \end{enumerate}
  \end{proposition}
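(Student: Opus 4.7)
The plan is to take $B(\gamma_0)$ to be the forward closure of $\{\gamma_0\}$ under $\DStep \cup \PStep$, which makes closure property (a) immediate by construction. Since neither $\DStep$ nor $\PStep$ permits the root to execute, the value $r_0 := \gamma_0.\r.d$ stays constant throughout $B(\gamma_0)$. A preparatory step I would carry out is to extract a uniform upper bound $M(\gamma_0)$ on the $d$-values across $B(\gamma_0)$: by induction on step length, every firing of Action $CD$ replaces $p.d$ by $\min\{q.d + 1 : q \in p.\mathit{neighbors}\}$, which is bounded by the previous network maximum plus one; combining this observation with the fact that the root's neighbors are repeatedly pulled toward $r_0+1$ yields a finite bound. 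This bound is what keeps the codomain of the potential function well-behaved.

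For (b) and (c), I would define $\dpot{\gamma_0}(\gamma)$ as a tuple of per-node ``smoothed'' potentials $\csmooth{p}{\gamma}$, each capturing the effective distance from $\r$ to $p$ induced by the current $d$-values --- typically an infimum over simple paths $\r \leadsto p$ of an aggregate (a max, or an appropriately shifted sum using $\maxd$ or $\sumd$) of the $d$-values along the path. This quantity depends only on the $d$-variables, as demanded by the statement, so Action $CP$ (and hence any $\PStep$) leaves it unchanged. The domain $D(\gamma_0)$ is the space of such tuples, and $\prec_d$ is the lexicographic order obtained by ranking nodes according to their graph distance $\dist{\r}{p}$, so that changes closer to the root dominate. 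Well-foundedness of $\prec_d$ follows from $M(\gamma_0)$ and the fact that lexicographic orderings over finitely many well-founded components are well-founded.

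The main obstacle is the strict-decrease property: for every $\gamma \in B(\gamma_0)$ with $\gamma \xstep{\DStep} \gamma'$, one must show $\dpot{\gamma_0}(\gamma') \prec_d \dpot{\gamma_0}(\gamma)$. By definition of $\DStep$, at least one non-root $p$ fires Action $CD$, so $\gamma.p.d \ne \gamma'.p.d$, while other nodes may concurrently execute $CD$ or $CP$. In the straightforward case where $p.d$ strictly decreases, $\csmooth{p}{\gamma}$ drops and no $\csmooth{q}{\cdot}$ for $q$ closer to $\r$ is affected, so the lexicographic descent is immediate. The delicate case is when some node's $d$-value actually \emph{increases} during the step, which happens precisely when all its neighbors had large values, meaning the node was erroneously low. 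The argument I would develop is that such an increase cannot raise the smoothed potential of any strictly-closer-to-root node, while simultaneously some node closer to $\r$ sees its local defect strictly decrease; the lexicographic comparison then still tips the right way. This interplay between rank, smoothing, and non-monotone per-node changes in $d$ is where the choice of aggregator in $\csmooth{}{}$ does the real work, and it is precisely the content that Section~\ref{sec:dstep-potential} is dedicated to developing in full.
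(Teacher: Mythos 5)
Your proposal has two genuine gaps, both located exactly where the paper does its real work. First, the uniform bound $M(\gamma_0)$ on $d$-values is not established by your argument: observing that each firing of $CD$ writes a value bounded by ``the previous network maximum plus one'' only shows the maximum can grow by one per step, which over an unbounded number of steps yields no bound at all. The paper obtains the bound through a carefully constructed top configuration $\dtop{\gamma_0}$, defined by recursion on the distance to the root ($\dtop{\gamma_0}.p.d$ is the maximum of $\gamma_0.p.d$ and $1+\min$ of the top values of the neighbours of $p$ strictly closer to $\r$), and then proves by induction on $\dist{p}{\r}$ that $\dtop{(.)}$ is monotone, idempotent, and non-increasing along $d$-steps (Lemmas~\ref{lemma:bounds:basic} and~\ref{lemma:bounds:dstep}). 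Your phrase ``the root's neighbors are repeatedly pulled toward $r_0+1$'' gestures at this but supplies neither the construction nor the invariant; the set $B(\gamma_0)$ is then taken in the paper to be $\{\gamma \mid \dbot{\gamma_0} \dleq \gamma \dleq \dtop{\gamma_0}\}$ rather than the forward closure, precisely so that the bound comes for free from membership.

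Second, and more seriously, your strict-decrease argument fails on the steps the paper calls \emph{smooth}: $d$-steps in which every executing node is incident only to edges whose endpoints differ by at most $1$ in $d$-value. In such a step every executing node strictly \emph{increases} its $d$-value, and no node closer to $\r$ ``sees its local defect strictly decrease'' --- nothing decreases anywhere, so a lexicographic order ranked by distance to the root over per-node path potentials has no component on which to make progress. The paper's resolution is a potential of a different shape: the primary component is the family $(\nsset{\gamma}{k})_k$ of non-smooth edges indexed by rank, compared by a set-inclusion lexicographic order (a non-smooth step strictly shrinks the set at some effectively computable minimal rank $k^*$ while preserving all sets below it, Lemma~\ref{lemma:dsteps:nonsmooth}); the secondary component is $\sumd\gamma$, which carries the smooth steps because it strictly increases and is bounded above by $\sumd\dtop{\gamma_0}$. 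Your proposal explicitly defers ``the choice of aggregator'' that would make the per-node potential decrease, but that choice is the entire content of the proposition; as written, no aggregator of $d$-values along paths is shown (or likely) to decrease lexicographically under smooth steps.
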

\end{quote}
The technical details and the complete proof of
Proposition~\ref{prop:dstep-potential} are presented in
Section~\ref{sec:dstep-potential}.

\begin{lemma}\label{lemma:dp-steps:wf}
  $\WellFounded~ (\DStep \cup \PStep)$
\end{lemma}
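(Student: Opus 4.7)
The plan is to apply Proposition~\ref{prop:well-founded} with $R_1 \isdef \DStep$, $R_2 \isdef \PStep$, and an arbitrary starting configuration $\gamma_0$. Condition (1), that $R_2$ is well-founded, is immediate from Lemma~\ref{lemma:p-steps:wf}. For the remaining conditions (2.i)--(2.iv), I invoke Proposition~\ref{prop:dstep-potential} applied to $\gamma_0$ to obtain the set $B(\gamma_0)$, the potential $\dpot{\gamma_0}$, and the well-founded order $\prec_d$. I take $B_1 \isdef B(\gamma_0)$, and lift the order and the equality on $D(\gamma_0)$ to configurations via the potential: $\gamma \xstep{R_1'} \gamma'$ iff $\dpot{\gamma_0}(\gamma') \prec_d \dpot{\gamma_0}(\gamma)$, and $\gamma \xstep{E_1} \gamma'$ iff $\dpot{\gamma_0}(\gamma) = \dpot{\gamma_0}(\gamma')$. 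Well-foundedness of $R_1'$ is inherited from $\prec_d$, and $E_1$, being the pull-back of equality, is trivially transitive.

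Each of (2.i)--(2.iii) then reduces to a direct reading of Proposition~\ref{prop:dstep-potential}. Clause (a) yields (2.i): $\gamma_0 \in B(\gamma_0)$ and $B(\gamma_0)$ is closed under d- or par-steps, that is, under $\DStep \cup \PStep$. Clause (c) yields (2.ii): for $\gamma \in B(\gamma_0)$ with $\gamma \xstep{\DStep} \gamma'$ the potential strictly decreases, which is exactly $\gamma \xstep{R_1'} \gamma'$. Condition (2.iii) is a one-line unfolding: if the potential strictly decreases from $\gamma$ to $\gamma'$ and is unchanged from $\gamma'$ to $\gamma''$, then it strictly decreases from $\gamma$ to $\gamma''$.

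The only clause that genuinely exercises the structure of \BFS is (2.iv), and this is precisely where clause (b) of Proposition~\ref{prop:dstep-potential} is indispensable. By definition, a par-step $\gamma \xstep{\PStep} \gamma'$ leaves all $d$-variables unchanged; and since $\dpot{\gamma_0}$ is independent of the $\mathit{par}$-variables, its value depends only on the $d$-variables, so $\dpot{\gamma_0}(\gamma) = \dpot{\gamma_0}(\gamma')$, i.e., $\gamma \xstep{E_1} \gamma'$. Proposition~\ref{prop:well-founded} then yields $\Acc~(\DStep \cup \PStep)~\gamma_0$ for every $\gamma_0$, hence $\WellFounded~(\DStep \cup \PStep)$. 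No real difficulty arises at this level of the argument: everything substantive has been relocated to the construction of $\dpot{\gamma_0}$ in Proposition~\ref{prop:dstep-potential}, and the key thing to keep an eye on is precisely that clause~(b) is what allows $\PStep$ to be encoded as an $E_1$-step, so the lexicographic combination with well-founded $\PStep$ suffices to close the proof.
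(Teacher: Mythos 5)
Your proof is correct and follows essentially the same route as the paper: Proposition~\ref{prop:well-founded} with $R_1 \isdef \DStep$, $R_2 \isdef \PStep$, the set $B(\gamma_0)$ and the pull-backs of $\prec_d$ and equality along $\dpot{\gamma_0}$ as $R_1'$ and $E_1$, with clause (b) of Proposition~\ref{prop:dstep-potential} discharging (2.iv) exactly as in the paper's argument.
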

\begin{proof}
  Assuming Proposition~\ref{prop:dstep-potential} (for now), we must
  prove $(\Acc~(\DStep \cup \PStep)~\gamma_0)$ for an arbitrary
  configuration $\gamma_0$. Therefore, we use
  Proposition~\ref{prop:well-founded} by taking $R_1 \isdef \DStep$,
  $R_2 \isdef \PStep$ and $x \isdef \gamma_0$.  First,
  Lemma~\ref{lemma:p-steps:wf} ensures that $R_2$ is well-founded.
  Using the Proposition~\ref{prop:dstep-potential} above, we define
  the set $B_1$, and the relations $R_1'$ and $E_1$ as
  follows:
  \[ \begin{array}{rcl} B_1 & \isdef &
    B(\gamma_0) \\ \gamma \xstep{R_1'}~\gamma' & \isdef
    & \dpot{\gamma_0}(\gamma') \prec_d \dpot{\gamma_0}(\gamma) \\
    \gamma \xstep{E_1}
    ~\gamma'& \isdef & \dpot{\gamma_0}(\gamma')
    = \dpot{\gamma_0}(\gamma) \end{array} \]
  The guaranties of Proposition~\ref{prop:dstep-potential} allow to
  fulfill the assumptions of Proposition~\ref{prop:well-founded}.
  Indeed, the relation $R_1'$ is well-founded, see
  Proposition~\ref{prop:dstep-potential}(\textit{c}).  The relation
  $E_1$ is transitive by its definition (based on equality of
  potentials).  The assumption (\textit{2.i}), that is, $B_1$ contains
  $x$ and is closed by $R_1$ or $R_2$ steps follows from
  Proposition~\ref{prop:dstep-potential}(\textit{a}).  Assumption
  (\textit{2.ii}), that is, $R_1'$ is an abstraction of $\DStep$ on
  the set $B_1$ holds because of
  Proposition~\ref{prop:dstep-potential}(\textit{c}).  Assumption
  (2.\textit{iii}) holds trivially by the construction of $R_1'$ and
  $E_1$. Last, assumption (2.\textit{iv}) holds as the potential
  function $\dpot{\gamma_0}$ is not depending on the \textit{par}
  variables, that is, remains insensitive to \textit{par}-steps.
  This proves $(\Acc~(\DStep \cup \PStep)~\gamma_0)$ and 
  finally, as the choice of $\gamma_0$ was arbitrary, we
  prove that $\WellFounded~ (\DStep \cup \PStep)$. \qed
\end{proof} 

It remains to take into account the root steps from $\RStep$.
Remind that the root $\r$ can execute at most once in any
execution: either its $d$-variable is 0 from the beginning and
$\r$ never executes; or it is positive and then $\r$ is enabled.
If it executes, the variable is set to 0 and $\r$ is then disabled
forever. The fact that $\RStep$ is well-founded is therefore trivial
to obtain.  

\begin{theorem}
  $\WellFounded~ (\RStep \cup \DStep \cup \PStep)$
\end{theorem}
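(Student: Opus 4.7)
The plan is to apply Proposition~\ref{prop:well-founded} once more, now layering $\RStep$ on top of the already well-founded relation $\DStep \cup \PStep$. I would fix an arbitrary initial configuration $\gamma_0$, take $R_1 \isdef \RStep$, $R_2 \isdef \DStep \cup \PStep$ and $x \isdef \gamma_0$, and aim to conclude $(\Acc~(R_1 \cup R_2)~\gamma_0)$; since $\gamma_0$ is arbitrary this yields $\WellFounded~(\RStep \cup \DStep \cup \PStep)$.

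By Lemma~\ref{lemma:dp-steps:wf}, the hypothesis that $R_2$ is well-founded is already established. The key observation, noted in the paragraph just before the theorem, is that the guard of Action $Root$ is $\r.d \neq 0$ and its effect is to set $\r.d$ to $0$, so $\r$ fires at most once in any execution. This is captured by the trivial two-valued potential $\pi(\gamma) \isdef 1$ if $\gamma.\r.d \neq 0$ and $\pi(\gamma) \isdef 0$ otherwise, with the strict order $0 < 1$ on the range.

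Accordingly, I would take $B_1 \isdef \Env$ (trivially closed under every step), define $\gamma \xstep{R_1'} \gamma'$ by $\pi(\gamma') < \pi(\gamma)$, and $\gamma \xstep{E_1} \gamma'$ by $\pi(\gamma) = \pi(\gamma')$. The hypotheses of Proposition~\ref{prop:well-founded} are then immediate: $R_1'$ is well-founded because $<$ on $\{0,1\}$ is; $E_1$ is an equivalence, hence transitive; (2.i) is vacuous since $B_1 = \Env$; (2.ii) holds because any $\RStep$ requires $\gamma.\r.d \neq 0$ and forces $\gamma'.\r.d = 0$, so $\pi$ drops from $1$ to $0$; (2.iii) follows from the fact that $R_1'$ depends only on the endpoint potentials, which $E_1$ preserves; and (2.iv) holds because both $\DStep$ and $\PStep$ leave $\r.d$ unchanged by their very definitions, hence preserve $\pi$.

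The substantive work for this theorem has already been discharged in Lemma~\ref{lemma:dp-steps:wf} and Proposition~\ref{prop:dstep-potential}; the only remaining step here is bookkeeping that absorbs the one-shot firing of the root's action as the outermost layer of a lexicographic ordering. I do not expect any real obstacle in this final combination.
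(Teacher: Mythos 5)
Your proposal is correct and follows essentially the same route as the paper: both instantiate Proposition~\ref{prop:well-founded} with $R_1 \isdef \RStep$, $R_2 \isdef \DStep \cup \PStep$, $B_1$ the set of all configurations, and an order/equality pair on configurations determined solely by the root's $d$-value, with Lemma~\ref{lemma:dp-steps:wf} supplying the well-foundedness of $R_2$. The only cosmetic difference is that you compare the two-valued indicator of $\gamma.\Root.d \neq 0$ where the paper directly takes $\gamma' \xstep{R_1'} \gamma \isdef \gamma'.\Root.d < \gamma.\Root.d$; both choices validate hypotheses (2.i)--(2.iv) for the same reasons.
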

\begin{proof}
  We use Proposition~\ref{prop:well-founded} by taking
  $R_1 \isdef \RStep$, $R_2 \isdef \DStep \cup \PStep$, and an
  arbitrary configuration $\gamma_0$.  First,
  Lemma~\ref{lemma:dp-steps:wf} ensures that $R_2$ is
  well-founded.  Second, we define the set $B_1 \isdef \Gamma$ and the
  relations $R_1'$ and $E_1$ as follows:
  \[\begin{array}{rcl}
  \gamma \xstep{R'_1} ~\gamma' & \isdef & \gamma'.\r.d < \gamma.\r.d \\
  \gamma \xstep{E_1}  ~\gamma' & \isdef & \gamma'.\r.d = \gamma.\r.d
  \end{array}\]
  Obviously, $R'_1$ is well-founded as observed above, and $E_1$ is
  transitive by definition.  Since $B_1$ contains all configurations,
  the assumption (\textit{2.i}) is trivially satisfied. Also, $R_1
  \subseteq R'_1$ holds by definition of $R_1'$, hence, it implies
  assumption (\textit{2.ii}).  Assumption (\textit{2.iii}) follows from
  definitions as well and assumption (\textit{2.iv}) holds because
  $\gamma.\r.d$ is not changing for any non-root step. \qed
\end{proof}

\section{A Decreasing Potential Function for $d$-steps}
\label{sec:dstep-potential}

This section is concerned with the proof of
Proposition~\ref{prop:dstep-potential} stated in
Section~\ref{sec:proof}. To this end, we proceed in three steps.
First, we establish a finite over-approximation on the set of the $d$
values that could be possibly reached in an execution involving
$d$-steps only from some initial configuration $\gamma_0$.
Second, we introduce a
partitioning of edges (being either smooth or non-smooth) and prove
some preservation properties along $d$-steps.
Third, we combine the above results to effectively construct a
potential function for $d$-steps and a well-founded order on the
co-domain of this function, ultimately proving
Proposition~\ref{prop:dstep-potential}.

For the sake of readability, we denote $d$-steps $\gamma \xstep{\DStep}
\gamma'$ shortly by $\gamma \dstep \gamma'$.

\subsection{Bounds on Distance Values}
\label{sec:dstep-potential:bounds}

For a configuration $\gamma$, we define the integers $\maxd\gamma
\isdef \max \{\gamma.q.d \mid q \in \nodes\}$, $\mind\gamma \isdef
\min \{\gamma.q.d \mid q \in \nodes\}$, $\sumd\gamma \isdef \sum
\{\gamma.q.d \mid q \in \nodes\}$\footnote{The sum is taken on the
multiset of $d$ values}.  We also define $\dbot{\gamma}$,
$\dtop{\gamma}$ respectively a \emph{bottom} and a \emph{top}
configuration associated to $\gamma$.  These are identical to $\gamma$
except for $d$ values, defined for every node $p$ as follows:
\begin{eqnarray*}
  \dbot{\gamma}.p.d & \isdef & \mind\gamma \\
  \dtop{\gamma}.p.d & \isdef & \left\{ \begin{array}{ll}
    \gamma.p.d & \mbox{if } p = r \\
    \max \{ \gamma.p.d, 1 + \min \{ \dtop{\gamma}.q.d \mid \\
    \hspace{1cm} (p,q)\in \Edges, \dist{p}{\r} = 1 + \dist{q}{\r} \}
    \} &
    \mbox{otherwise,}
  \end{array} \right.
\end{eqnarray*}
where $\dist{q}{\r}$ represents the distance of some node $q$ to the
root \r.  Note that the recursive definition of $\dtop{\gamma}.p.d$ is
well-defined as the recursion is limited to neighbours $q$ of $p$
located at a smaller distance to the root $\r$ than $p$.  Intuitively,
the maximal $d$ value of a non-root node $p$ in some configuration
reachable from $\gamma$ is either its value in $\gamma$ (\ie, it can
be the case when $p$ does not execute) or 1 plus the minimum of the
maximal $d$ values of its neighbors $q$ closer to the root
(see Action $CD$ when $p$ executes).
We define the partial order $\dleq$ on configurations by taking
$$ \gamma_1 \dleq \gamma_2 \isdef \forall q \in \nodes: \gamma_1.q.d
\le \gamma_2.q.d $$
The next lemma states basic properties of the $\dbot{(.)}$,
$\dtop{(.)}$ operators, namely their idempotence and
their monotonicity with respect to $\dleq$.  The proof follows from
definitions and uses induction on nodes according to their
distance to the root.

\begin{lemma} \label{lemma:bounds:basic} ~
  
  \begin{enumerate}[label=(\roman*)]
  \item For all configuration $\gamma$,
  $\dbot{\gamma} \dleq \gamma \dleq \dtop{\gamma}$,
  $\dbot{(\dbot{\gamma})} = \dbot{\gamma}$ and $\dtop{(\dtop{\gamma})}
  = \dtop{\gamma}$.

  \item For all configurations $\gamma_1$ and $\gamma_2$ such that
  $\gamma_1 \dleq \gamma_2$, $\dbot{\gamma_1} \dleq \dbot{\gamma_2}$ and
  $\dtop{\gamma_1} \dleq \dtop{\gamma_2}$.

\end{enumerate}
\end{lemma}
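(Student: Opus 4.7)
The plan is to handle the $\dbot{(.)}$ and $\dtop{(.)}$ operators separately for both parts, since they are essentially independent. The $\dbot$ claims are nearly immediate from the observation that $\dbot{\gamma}$ is simply the constant map $p \mapsto \mind\gamma$: the inequality $\dbot\gamma \dleq \gamma$ follows from the definition of min; idempotence holds because the min of a constant family is that constant; and monotonicity reduces to $\gamma_1 \dleq \gamma_2 \Rightarrow \mind\gamma_1 \le \mind\gamma_2$, which is immediate from pointwise monotonicity of min.

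For the $\dtop$ operator I would proceed by strong induction on $k = \dist{p}{\r}$, exploiting the fact — already noted in the paper — that the recursion defining $\dtop\gamma.p.d$ only refers to neighbors $q$ of $p$ with $\dist{q}{\r} = k-1$, and that this set of neighbors depends only on the graph structure (hence is the same for $\gamma$, $\dtop\gamma$, and any related configuration). For $p = \r$ all three statements collapse to their values in $\gamma$ directly. In the inductive step, the inequality $\gamma.p.d \le \dtop\gamma.p.d$ is immediate from the outer $\max$; monotonicity combines the inductive hypothesis (giving $\dtop{\gamma_1}.q.d \le \dtop{\gamma_2}.q.d$ for each closer neighbor $q$) with $\gamma_1.p.d \le \gamma_2.p.d$, under the monotonicity of min and max. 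Idempotence is the most delicate step: the inductive hypothesis yields $\dtop{(\dtop\gamma)}.q.d = \dtop\gamma.q.d$ for every closer $q$, so unfolding the definition at $p$ gives
\[
\dtop{(\dtop\gamma)}.p.d \;=\; \max\bigl\{\dtop\gamma.p.d,\; 1 + \min_q \dtop\gamma.q.d\bigr\}.
\]
But the second argument of this $\max$ is already dominated by $\dtop\gamma.p.d$ by the defining equation of the latter, so the whole expression collapses to $\dtop\gamma.p.d$.

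The main obstacle is really notational rather than conceptual: one must keep the three $\dtop$ statements coupled during the induction so that the inductive hypothesis can be invoked in the right form at each step — in particular, idempotence needs the pointwise-equality version of the IH on strictly closer nodes, and monotonicity needs its pointwise-inequality version. Beyond this, the proof is a routine bookkeeping exercise using the monotonicity of min and max together with the identity $\max\{a,b\} = a$ when $b \le a$. No dynamics of $d$-steps enter here; the lemma only concerns the static operators $\dbot{(.)}$ and $\dtop{(.)}$ and is intended as a reusable tool for the sequel.
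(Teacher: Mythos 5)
Your proposal is correct and follows essentially the same route as the paper, which proves this lemma directly from the definitions by induction on nodes ordered by their distance to the root; your write-up simply makes explicit the bookkeeping (constancy of $\dbot{(.)}$, monotonicity of $\min$/$\max$, and the collapse $\max\{a,b\}=a$ when $b\le a$ for idempotence of $\dtop{(.)}$) that the paper leaves implicit.
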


The next lemma relates the bottom and top configurations to $d$-steps.
The proof is done by induction respectively, on the set of nodes
according to their distance to the root (i) and on the length of an
execution sequence from $\gamma_0$ (ii).\footnote{$\gamma_0 \dstepstar
\gamma$ means that $\gamma$ is reachable from $\gamma_0$ using a
finite number of $d$-steps.}
\begin{lemma}\label{lemma:bounds:dstep} ~
  
  \begin{enumerate}[label=(\roman*)]  
  \item For all configurations $\gamma$ and $\gamma'$ such that
  $\gamma \dstep \gamma'$, $\dbot{\gamma} \dleq \dbot{\gamma'}$ and
  $\dtop{\gamma'} \dleq \dtop{\gamma}$.

  \item For all configurations $\gamma_0$ and $\gamma$ such that
  $\gamma_0 \dstepstar \gamma$,
  $\dbot{\gamma_0} \dleq \gamma \dleq \dtop{\gamma_0}$.

  \end{enumerate}
\end{lemma}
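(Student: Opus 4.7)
The plan is to handle the two parts separately, proving (i) directly from the semantics of d-steps (split into the bottom and top inequalities), then deriving (ii) by induction on the length of the d-step sequence together with Lemma~\ref{lemma:bounds:basic}.

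For (i), the bottom inequality $\dbot{\gamma} \dleq \dbot{\gamma'}$ reduces to showing $\mind\gamma \le \mind\gamma'$, so it suffices to check that for every node $p$, $\gamma'.p.d \ge \mind\gamma$. I would case-split on whether $p$ executes a $CD$-action: if not, then either $p$ is idle or executes $CP$, so $\gamma'.p.d = \gamma.p.d \ge \mind\gamma$; if $p$ executes $CD$, then $\gamma'.p.d = \min\{ \gamma.q.d + 1 \mid q \in p.\mathit{neighbors}\} \ge \mind\gamma + 1$. Recall that the root does not execute during a d-step, so $\gamma'.\r.d = \gamma.\r.d$, which covers the root case.

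The top inequality $\dtop{\gamma'} \dleq \dtop{\gamma}$ is the main obstacle, because of the recursive definition of $\dtop{(.)}$. I would prove it by (strong) induction on $\dist{p}{\r}$. The base case $p = \r$ is immediate as the root does not change. In the inductive step, the induction hypothesis gives $\dtop{\gamma'}.q.d \le \dtop{\gamma}.q.d$ for every neighbor $q$ strictly closer to $\r$, so the minimum that appears in the definition of $\dtop{\gamma'}.p.d$ is bounded by the corresponding minimum in $\dtop{\gamma}.p.d$. It remains to bound the other argument of the $\max$, namely $\gamma'.p.d$. If $p$ does not execute $CD$, this is trivial; if $p$ executes $CD$, then $\gamma'.p.d$ equals the minimum over \emph{all} neighbors of $\gamma.q.d + 1$, which is upper-bounded by the same minimum restricted to closer neighbors $q$, hence upper-bounded by $1 + \min\{\dtop{\gamma}.q.d \mid (p,q)\in\Edges, \dist{p}{\r} = 1 + \dist{q}{\r}\}$, which is at most $\dtop{\gamma}.p.d$.

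For (ii), I would proceed by induction on the number $n$ of d-steps in $\gamma_0 \dstepstar \gamma$. The case $n = 0$ is exactly Lemma~\ref{lemma:bounds:basic}(i). For the inductive step $\gamma_0 \dstepstar \gamma'' \dstep \gamma$, the induction hypothesis yields $\dbot{\gamma_0} \dleq \gamma'' \dleq \dtop{\gamma_0}$, and applying Lemma~\ref{lemma:bounds:basic}(ii) to $\dbot{\gamma_0} \dleq \gamma''$ and $\gamma'' \dleq \dtop{\gamma_0}$ gives $\dbot{(\dbot{\gamma_0})} \dleq \dbot{\gamma''}$ and $\dtop{\gamma''} \dleq \dtop{(\dtop{\gamma_0})}$; using the idempotence from Lemma~\ref{lemma:bounds:basic}(i), this is $\dbot{\gamma_0} \dleq \dbot{\gamma''}$ and $\dtop{\gamma''} \dleq \dtop{\gamma_0}$. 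Now (i) applied to $\gamma'' \dstep \gamma$ gives $\dbot{\gamma_0} \dleq \dbot{\gamma''} \dleq \dbot{\gamma}$ and $\dtop{\gamma} \dleq \dtop{\gamma''} \dleq \dtop{\gamma_0}$, and chaining with $\dbot{\gamma} \dleq \gamma \dleq \dtop{\gamma}$ from Lemma~\ref{lemma:bounds:basic}(i) yields the result.
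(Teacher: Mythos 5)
Your proposal is correct and follows the same route the paper sketches: the top-configuration inequality in (i) is handled by induction on nodes ordered by their distance to the root, and (ii) is obtained by induction on the length of the $d$-step sequence, combining part (i) with the idempotence and monotonicity of $\dbot{(.)}$ and $\dtop{(.)}$ from Lemma~\ref{lemma:bounds:basic}. The details you supply (the case split on whether a node executes a $CD$-action, and the observation that the minimum over all neighbors is bounded by the minimum over the strictly closer ones) are exactly the right ones and fill in what the paper leaves implicit.
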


\subsection{Smooth and Non-smooth $d$-steps}
\label{sec:dstep-potential:smooth}

We say that an edge $(p,q)\in\Edges$ is \emph{smooth}
(resp. \emph{non-smooth}) in a configuration $\gamma\in\Env$ if the
difference (in absolute value, $abs$) between the $d$-values at its
endpoints $p$, $q$ is at most 1 (resp. at least 2).  Formally,
consider the predicate
$$\csmooth{\gamma}{(p,q)} \isdef (abs(\gamma.p.d - \gamma.q.d) \le 1).$$
We say that a $d$-step $\gamma \dstep \gamma'$ is \emph{smooth} if all the
nodes $p$ changing their values from $\gamma$ to $\gamma'$ are
connected to smooth edges only in $\gamma$, formally:
$$\begin{array}{l}
  \ssmooth{\gamma \dstep \gamma'} \isdef \\
  \hspace{1cm} \forall p \in \Nodes: (\gamma'.p.d \not= \gamma.p.d) \Rightarrow
  (\forall q \in p.neighbors: \csmooth{\gamma}{(p,q)}
\end{array}$$
We define the rank of an edge $(p,q)\in\Edges$ in a configuration
$\gamma\in\Env$ as
$\crank{\gamma}{(p,q)} \isdef \min(\gamma.p.d, \gamma.q.d)$.

\begin{figure}[th]
  \centering
  \scalebox{0.9}{\input{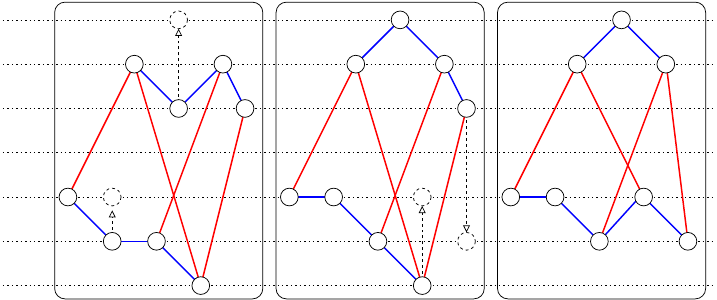_t}}
  \caption{\label{fig:d-steps}Smooth and non-smooth steps}
\end{figure}

For illustration, consider the three configurations $\gamma_1$,
$\gamma_2$, $\gamma_3$ depicted in Fig.~\ref{fig:d-steps}.  We
represented the $d$ values of the nodes by their positioning on the
horizontal lines e.g., $\gamma_1.\r.d = 10$, $\gamma_1.p_1.d = 9$,
$\gamma_2.p_1.d = 10$, etc.  Edges are represented by lines
connecting nodes: smooth (resp. non-smooth) edges are depicted in
blue (resp. red).  Configuration $\gamma_2$ is the successor of
$\gamma_1$ by a smooth step.  That is, only $p_1$ and $p_6$ have
executed and these nodes were connected only to smooth (blue) edges
in $\gamma_1$.  Configuration $\gamma_3$ is the successor of
$\gamma_2$ by a non-smooth step.  That is, $p_3$ and $p_4$ have been
executed along the step, and these nodes were connected to some
non-smoth edges.

The next lemmas provide key properties for understanding the execution
of $d$-steps, depending if they are smooths or not.
Lemma~\ref{lemma:dsteps:smooth} basically states that partitioning
between smooth and non-smooth, as well as the rank of every non-smooth
edge is preserved by smooth steps.  In addition, the total sum of $d$
values is increasing along such a step.  

\begin{lemma}\label{lemma:dsteps:smooth}
  Consider a smooth d-step $\gamma \dstep \gamma'$.  Then,
  \begin{enumerate}[label=(\roman*)]
  \item $\forall e \in \edges: \neg \csmooth{\gamma}{e} \Leftrightarrow
    \neg \csmooth{\gamma'}{e}$,
  \item $\forall e \in \edges: \neg \csmooth{\gamma}{e} \Rightarrow
    (\crank{\gamma}{e} = \crank{\gamma'}{e})$,
  \item $\sumd \gamma' > \sumd \gamma$.
  \end{enumerate}
\end{lemma}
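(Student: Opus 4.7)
The plan is to build the proof around a single key monotonicity observation: along any smooth $d$-step $\gamma \dstep \gamma'$, every node whose $d$-value changes strictly \emph{increases} its value. All three claims of the lemma follow easily from this observation.

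To prove the observation, I would fix a node $p$ with $\gamma'.p.d \neq \gamma.p.d$. Since this is a $d$-step, $p$ necessarily executed its action $CD$, so $\gamma'.p.d = Dist_p = \min\{\gamma.q.d + 1 \mid q \in p.\mathit{neighbors}\}$. By smoothness of the step and the fact that $p$ changed, \emph{every} edge $(p,q)$ incident to $p$ is smooth in $\gamma$, hence $\gamma.q.d \geq \gamma.p.d - 1$ for all neighbours $q$. Each term in the minimum is then at least $\gamma.p.d$, giving $Dist_p \geq \gamma.p.d$; combined with $\gamma'.p.d \neq \gamma.p.d$, this yields $\gamma'.p.d > \gamma.p.d$.

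Claim $(iii)$ is then immediate: $CP$-actions do not affect $d$-values, $CD$-executing nodes strictly increase them by the observation, and by definition of $\DStep$ at least one non-root node runs $CD$; summing over all nodes gives $\sumd \gamma' > \sumd \gamma$. For claims $(i)$ and $(ii)$, I would split by whether an edge $e = (p,q) \in \edges$ is smooth in $\gamma$. If $e$ is non-smooth in $\gamma$, then the smoothness hypothesis on the step forbids either endpoint from changing (each is incident to the non-smooth edge $e$), so both $d$-values are preserved; therefore $e$ stays non-smooth in $\gamma'$ with the same rank, which gives one implication of $(i)$ together with the whole of $(ii)$. For the converse of $(i)$, suppose $e$ is smooth in $\gamma$; I would check that it remains smooth after the step. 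If only $p$ changes, then $\gamma'.p.d \leq \gamma.q.d + 1$ (since $q$ is a neighbour of $p$ contributing to $Dist_p$) and by the monotonicity observation $\gamma'.p.d \geq \gamma.p.d \geq \gamma.q.d - 1 = \gamma'.q.d - 1$, so $\mathit{abs}(\gamma'.p.d - \gamma'.q.d) \leq 1$; the symmetric case and the case where both endpoints change are analogous, combining each upper bound $\gamma'.p.d \leq \gamma.q.d + 1$ with the monotonicity $\gamma'.q.d \geq \gamma.q.d$ to bound the signed difference by $1$.

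The only delicate point is the last case analysis for $(i)$: keeping the interaction between the inequality $\gamma'.p.d \leq \gamma.q.d + 1$ coming from the $CD$-rule and the monotonicity observation straight in the both-change subcase. Everything else reduces mechanically to the monotonicity observation and the definitions of $Dist_p$ and $\csmooth{\cdot}{\cdot}$.
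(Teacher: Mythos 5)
Your proof is correct and follows essentially the same route as the paper's: non-smooth edges (and their ranks) are preserved because a smooth step forbids their endpoints from executing, and the sum of $d$-values increases because every node that executes in a smooth step strictly increases its value. You are in fact somewhat more thorough than the paper's terse argument, which leaves the backward implication of (i) (smooth edges remain smooth) implicit, whereas you verify it explicitly by combining the bound $\gamma'.p.d \le \gamma.q.d + 1$ from the $CD$-rule with your monotonicity observation.
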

\begin{proof}
  The proof follows immediately from the definition of smooth steps
  and/or edges.  First, the fact that non-smooth edges are preserved
  along with their rank in a smooth $d$-step directly comes from the
  definition of a smooth step: since no node connected to a non-smooth
  edge can execute, non-smooth edges remained unchanged.
  Second, we obtain the increasing of the sum of all $d$-values by
  observing that when a node executes in a smooth $d$-step, its $d$
  value increases by one or two (due to its neighbors which are either
  above by one or at the same level of $d$ value). As a smooth
  $d$-step involves at least one such an executing node, $\sumd$
  necessarily increases (since nodes that do not increase $d$ leave it
  unchanged).
  \qed
\end{proof}

For illustration, consider the smooth step depicted in
Fig.~\ref{fig:d-steps}, \ie, between $\gamma_1$ and $\gamma_2$.  It is
rather trivial that, as long as the nodes executing were connected to
smooth edges only (in blue), their execution has no impact on the
non-smooth edges \ie, they remain non-smooth and preserve their rank.
Yet, the overall sum of the $d$ values increases, here because at
least the values of the two moving nodes has increased (by 1 for $p_1$
and by 2 for $p_6$).

Lemma~\ref{lemma:dsteps:nonsmooth} provides a similar preservation
result for non-smooth steps.  In this case, the key property is that
one can effectively compute a bound $k^*$ such that (i) all non-smooth
edges with rank lower than $k^*$ remain non-smooth and preserve their
rank and (ii) the set of non-smooth edges with rank $k^*$ is strictly
decreasing along the step.  The lemma provides both the explicit
definition of $k^*$ as well as the identification of a non-smooth edge
at level $k^*$ which either becomes smooth or gets a reduced rank
after the step, that is, some edge $(p,q)$ for which the minimum is
achieved in the definition of $k^*$.

\begin{lemma}\label{lemma:dsteps:nonsmooth} Consider a non-smooth d-step $\gamma \dstep \gamma'$.   Let
  $$\begin{array}{l} k^* \isdef \min \{ \crank{\gamma}{(p,q)} \mid (p, q) \in \edges:
    \neg \csmooth{\gamma}{(p,q)}, \\
    \hspace{5cm} \gamma'.p.d \not=\gamma.p.d \mbox{ or } \gamma'.q.d \not= \gamma.q.d \}
    \end{array}$$
  Then,
  \begin{enumerate}[label=(\roman*)]
  \item $\forall e \in \edges: (\crank{\gamma'}{e} \le k^* \wedge \neg \csmooth{\gamma'}{e}) \Rightarrow \\
    \hspace*{3cm} (\crank{\gamma}{e} = \crank{\gamma'}{e} \wedge \neg \csmooth{\gamma}{e})$,
  \item $\forall e \in \edges: (\crank{\gamma}{e} < k^* \wedge \neg \csmooth{\gamma}{e}) \Rightarrow \\
    \hspace*{3cm} (\crank{\gamma'}{e} = \crank{\gamma}{e} \wedge \neg \csmooth{\gamma'}{e})$,
  \item $\exists e \in \edges: (\crank{\gamma}{e} = k^* \wedge \neg \csmooth{\gamma}{e}) \wedge \\
    \hspace*{3cm} (\neg \csmooth{\gamma'}{e} \Rightarrow \crank{\gamma'}{e} > \crank{\gamma}{e}))$.
  \end{enumerate}
\end{lemma}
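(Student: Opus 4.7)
The plan is to exploit the $CD$-semantics: if a node $p$ changes its $d$-value during the step $\gamma \dstep \gamma'$, it must have executed Action $CD$, so $\gamma'.p.d = 1 + m_p$ with $m_p \isdef \min\{\gamma.r.d \mid r \in p.\mathit{neighbors}\}$. In particular, for every neighbor $q$ of a changing node $p$, both $\gamma'.p.d \le 1 + \gamma.q.d$ and $\gamma'.p.d \ge 1 + m_p$ hold. These bounds, combined with the defining minimality of $k^*$, are the workhorse throughout.

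Part~(ii) is straightforward: if an edge $e = (p, q)$ is non-smooth in $\gamma$ with $\crank{\gamma}{e} < k^*$, then neither $p$ nor $q$ can change during the step, since otherwise $e$ would itself be a non-smooth edge with a changing endpoint of rank strictly less than $k^*$, contradicting the definition of $k^*$. Both endpoints are preserved, and so are the non-smoothness and the rank of $e$.

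For part~(i), I would take $e = (p, q)$ non-smooth in $\gamma'$ with $\crank{\gamma'}{e} \le k^*$ and, WLOG, $\gamma'.p.d \le \gamma'.q.d$, and show that no endpoint changes. If $q$ changes but $p$ does not, then using $p$ as a neighbor of $q$ gives $m_q \le \gamma.p.d = \gamma'.p.d$, while non-smoothness forces $m_q = \gamma'.q.d - 1 \ge \gamma'.p.d + 1$, a contradiction. If $p$ changes, let $r$ be a neighbor of $p$ realizing $m_p \le k^* - 1$; a short case split on $\gamma.p.d$ (which cannot equal $m_p + 1 = \gamma'.p.d$ because $p$ changes) shows that either $(p, r)$ or $(p, q)$ is non-smooth in $\gamma$ with rank at most $m_p < k^*$ and with $p$ as a changing endpoint, contradicting the minimality of $k^*$. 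Hence both endpoints are unchanged and $e$ was already non-smooth in $\gamma$ with the same rank. For part~(iii), I would take as witness an edge $e = (p, q)$ realizing the minimum defining $k^*$, WLOG with $\gamma.p.d = k^*$ (so $\gamma.q.d \ge k^* + 2$). The minimality of $k^*$ forces every neighbor of a changing endpoint to have $d$-value at least $k^* - 1$ (if that endpoint is $p$) or $k^*$ (if it is $q$), whence a direct computation yields $\gamma'.p.d \ge k^* + 1$ whenever $p$ changes and $\gamma'.q.d = k^* + 1$ exactly whenever $q$ changes (using $p$ itself as a neighbor realizing the minimum). A short case analysis on which of $p, q$ actually change then shows that $e$ either becomes smooth in $\gamma'$ or its rank strictly exceeds $k^*$.

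The main obstacle is the harder direction of part~(i): it requires back-propagating the non-smoothness information from $\gamma'$ to exhibit a non-smooth edge in $\gamma$ of strictly smaller rank adjacent to a changing endpoint, which is precisely the configuration forbidden by the minimality of $k^*$; the case split on $\gamma.p.d$ relative to $m_p$ is what makes this back-propagation go through uniformly, regardless of whether $q$ also changes.
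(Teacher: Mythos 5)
Your proposal is correct and follows essentially the same route as the paper's proof: part~(ii) directly from the minimality of $k^*$, part~(i) by a case analysis on which endpoints execute that derives contradictions with either the minimality of $k^*$ or the non-smoothness of $e$ in $\gamma'$, and part~(iii) by taking a witness edge achieving the minimum and establishing exactly the two facts the paper uses ($\gamma'.p.d > \gamma.p.d$ if $p$ executes, $\gamma'.q.d = \gamma.p.d + 1$ if $q$ executes). Your exploitation of the $CD$-semantics via $m_p$ just makes explicit the computations the paper leaves implicit.
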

\begin{proof}
  (i) The proof is done by case splitting, considering which endpoints
  of non-smooth edges $e$ execute. In fact, the only feasible case is
  when none of them executes.  In all other cases, by choosing the
  node which gives a new value to its $d$ variable, we obtain a
  contradiction, either with the minimality of $k^*$ or with the
  non-smoothness of $e$ in $\gamma'$.

  (ii) By definition of $k^*$, no node involved in a non-smooth edge
  can execute if the rank is below $k^*$, hence rank and
  non-smoothness are left unchanged.

  (iii) Note here that, using Coq, to be able to prove "$\exists e \in
  \edges: ...$", we have to effectively contruct such an edge. In our
  case, it is chosen as some of the edges which achieves the minimum
  rank value when computing $k^*$: a non-smooth edge $e^*$ such that
  $\crank{\gamma}{e^*} = k^*$, and one of its end nodes executes
  during the step (it exists and can be computed using the computation
  of the minimum value over a finite set). Now, consider the case
  where $e^*$ remains non-smooth in $\gamma'$. We note $e^*=(p, q)$
  with $\crank{\gamma}{(p, q)} = \gamma.p.d$. We can prove that if $p$
  executes then $\gamma'.p.d > \gamma.p.d$ and that if $q$ executes
  then $\gamma'.q.d = \gamma.p.d + 1$ (see Fig.~\ref{fig:dsteps:proof}
  for an illustration). The result is then easy to conclude.
  \qed
\end{proof}

\begin{figure}[th]
  \begin{center}
    \scalebox{0.9}{\input{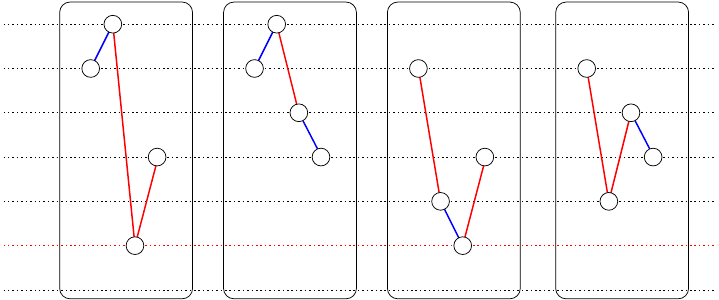_t}}
  \end{center}
  \caption{\label{fig:dsteps:proof}Possible evolutions of a non-smooth edge $e^*=(p,q)$ with
    minimal rank $k^*$: (i) only $p$ executes, (ii) only $q$ executes
    (iii) $p$ and $q$ executes}
\end{figure}

For illustration also, consider the non-smooth step depicted in
Fig.~\ref{fig:d-steps} between $\gamma_2$ and $\gamma_3$.  In this
case, the bound value is $k^* = 8$.  The lemma ensures that the set of
non-smooth edges of rank strictly lower than $8$ are unchanged.  No
such edges actually exist in the configurations $\gamma_2$ or
$\gamma_3$. But, actually, it is not hard to imagine that if such
edges would exist and are not related to $p_3$ and $p_6$, they would
not be impacted by the move.  Also, the lemma guarantees that the set
of edges at level 8 is strictly decreasing.  That is, the set of
non-smooth edges at level 8 is $\{ (p_3,p_7), (p_3,p_4) \}$ in
$\gamma_2$, respectively $\emptyset$ in $\gamma_3$.

Finally, we define $\nsset{\gamma}{k} \isdef \{ e \in \edges \mid \neg
\csmooth{\gamma}{e} \;\wedge\; \crank{\gamma}{e} = k \}$, that is, the
set of non-smooth edges of rang $k$ in $\gamma$.  The next lemma
simply re-formulates the results of Lemma \ref{lemma:dsteps:smooth} in
point (i) and Lemma \ref{lemma:dsteps:nonsmooth} in point (ii) into a
single statement about the sets $\nsset{\gamma}{k}$ to
facilitate their use in the definition of the potential function in
the next subsection.

\begin{lemma}\label{lemma:dsteps}
  Consider a d-step $\gamma \dstep \gamma'$.  Then
  \begin{enumerate}[label=(\roman*)]
  \item if the step $\gamma \dstep \gamma'$ is smooth then
    $\nsset{\gamma}{k} = \nsset{\gamma'}{k}$ for all integer $k$,
  \item if the step $\gamma \dstep \gamma'$ is non-smooth then
    (a) $\nsset{\gamma}{k} = \nsset{\gamma'}{k}$ for all integer $k <
    k^*$ and (b) $\nsset{\gamma'}{k^*} \subsetneq
    \nsset{\gamma}{k^*}$.
  \end{enumerate}
\end{lemma}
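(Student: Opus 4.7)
The plan is to derive both claims directly from Lemma~\ref{lemma:dsteps:smooth} and Lemma~\ref{lemma:dsteps:nonsmooth}, essentially reorganizing their pointwise statements about individual edges into set-level equalities and inclusions indexed by rank $k$. No new combinatorial reasoning is needed; the proof is pure bookkeeping on the definition $\nsset{\gamma}{k} \isdef \{ e \in \edges \mid \neg\csmooth{\gamma}{e} \wedge \crank{\gamma}{e} = k \}$.

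For (i), fix an integer $k$ and a smooth $d$-step $\gamma \dstep \gamma'$. Membership of an edge $e$ in $\nsset{\gamma}{k}$ requires $\neg\csmooth{\gamma}{e}$ and $\crank{\gamma}{e}=k$. Lemma~\ref{lemma:dsteps:smooth}(i) gives an equivalence between non-smoothness in $\gamma$ and in $\gamma'$, and Lemma~\ref{lemma:dsteps:smooth}(ii) preserves the rank of every non-smooth edge. Conjoining the two yields the equivalence $e \in \nsset{\gamma}{k} \Leftrightarrow e \in \nsset{\gamma'}{k}$, hence set equality for all $k$.

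For (ii), consider a non-smooth step together with its associated threshold $k^*$. The inclusion $\nsset{\gamma}{k} \subseteq \nsset{\gamma'}{k}$ for every $k < k^*$ comes straight from Lemma~\ref{lemma:dsteps:nonsmooth}(ii), and the converse inclusion follows from Lemma~\ref{lemma:dsteps:nonsmooth}(i) applied with $k \le k^*$ (since $k < k^*$ in particular satisfies $k \le k^*$); together these prove (ii)(a). For (ii)(b), the inclusion $\nsset{\gamma'}{k^*} \subseteq \nsset{\gamma}{k^*}$ is again Lemma~\ref{lemma:dsteps:nonsmooth}(i) applied at $k = k^*$. To upgrade it to a strict inclusion, I would invoke Lemma~\ref{lemma:dsteps:nonsmooth}(iii), which furnishes an explicit witness $e^*$ with $\crank{\gamma}{e^*} = k^*$ and $\neg\csmooth{\gamma}{e^*}$, so that $e^* \in \nsset{\gamma}{k^*}$, but whose fate after the step is either to become smooth or to remain non-smooth with a strictly larger rank; in either subcase $e^* \notin \nsset{\gamma'}{k^*}$, providing the missing element.

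I do not anticipate any real obstacle here: all the work has been front-loaded into Lemmas~\ref{lemma:dsteps:smooth} and~\ref{lemma:dsteps:nonsmooth}. The only point requiring a little care is case (ii)(a), where one must apply the two complementary halves of Lemma~\ref{lemma:dsteps:nonsmooth} (namely (i) and (ii)) with the right rank bound to cover both inclusions; the strictness in (ii)(b) likewise hinges entirely on using the constructive witness supplied by Lemma~\ref{lemma:dsteps:nonsmooth}(iii) rather than any additional argument.
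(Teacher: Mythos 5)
Your proposal is correct and matches the paper's intent exactly: the paper gives no separate proof of this lemma, stating only that it ``simply re-formulates'' Lemma~\ref{lemma:dsteps:smooth}(i)--(ii) for point (i) and Lemma~\ref{lemma:dsteps:nonsmooth}(i)--(iii) for point (ii), which is precisely the bookkeeping you carry out. Your handling of the two inclusions in (ii)(a) and the use of the witness from Lemma~\ref{lemma:dsteps:nonsmooth}(iii) for strictness in (ii)(b) are all as intended.
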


\subsection{Potential Function and Proof of
  Proposition~\ref{prop:dstep-potential}} 
\label{sec:dstep-potential:function}

Given a finite interval of integers $K$, and two finite sequences of
$K$-indexed finite sets $\mathcal{X} \isdef (X_k)_{k \in K}$,
$\mathcal{Y} \isdef (Y_k)_{k \in K}$ we write $\mathcal{X} =
\mathcal{Y}$ whenever $X_k = Y_k$ for all $k \in K$, and $\mathcal{X}
\prec_{setlex} \mathcal{Y}$ whenever there exists an integer $k^* \in
K$ such that $X_k = Y_k$ for all $k \in K$, $k<k^*$ and $X_{k^*}
\subsetneq Y_{k^*}$.  Note that $\prec_{setlex}$ is a well-founded
lexicographic order on the set of finite sequences of $K$-indexed
finite sets.

\subsection*{Proof of Proposition~\ref{prop:dstep-potential}}
\begin{proof}
  (a) We define $B(\gamma_0) \isdef \{ \gamma ~|~ \dbot{\gamma_0}
  \dleq \gamma \dleq \dtop{\gamma_0} \}$.  From
  Lemma~\ref{lemma:bounds:basic}(i) we obtain immediately $\gamma_0 \in
  B(\gamma_0)$.  The set $B(\gamma_0)$ is obviously closed by taking
  $par$-steps, as these steps do no change the values of $d$-variables.
  The closure of $B(\gamma_0)$ by $d$-steps can be understood by the
  $\dleq$ inequalities depicted below:

  \begin{center}
    \begin{picture}(0,0)%
\includegraphics{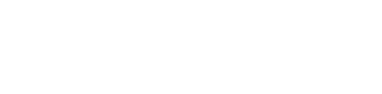}%
\end{picture}%
\setlength{\unitlength}{3315sp}%
\begingroup\makeatletter\ifx\SetFigFont\undefined%
\gdef\SetFigFont#1#2#3#4#5{%
  \reset@font\fontsize{#1}{#2pt}%
  \fontfamily{#3}\fontseries{#4}\fontshape{#5}%
  \selectfont}%
\fi\endgroup%
\begin{picture}(3630,886)(2236,-2105)
\put(3601,-1366){\makebox(0,0)[lb]{\smash{{\SetFigFont{8}{9.6}{\rmdefault}{\mddefault}{\updefault}{\color[rgb]{0,0,0}$\dleq$}%
}}}}
\put(4051,-1366){\makebox(0,0)[lb]{\smash{{\SetFigFont{8}{9.6}{\rmdefault}{\mddefault}{\updefault}{\color[rgb]{0,0,0}$\gamma$}%
}}}}
\put(4051,-2041){\makebox(0,0)[lb]{\smash{{\SetFigFont{8}{9.6}{\rmdefault}{\mddefault}{\updefault}{\color[rgb]{0,0,0}$\gamma'$}%
}}}}
\put(3601,-2041){\makebox(0,0)[lb]{\smash{{\SetFigFont{8}{9.6}{\rmdefault}{\mddefault}{\updefault}{\color[rgb]{0,0,0}$\dleq$}%
}}}}
\put(4501,-2041){\makebox(0,0)[lb]{\smash{{\SetFigFont{8}{9.6}{\rmdefault}{\mddefault}{\updefault}{\color[rgb]{0,0,0}$\dleq$}%
}}}}
\put(4951,-2041){\makebox(0,0)[lb]{\smash{{\SetFigFont{8}{9.6}{\rmdefault}{\mddefault}{\updefault}{\color[rgb]{0,0,0}$\dtop{\gamma'}$}%
}}}}
\put(3151,-2041){\makebox(0,0)[lb]{\smash{{\SetFigFont{8}{9.6}{\rmdefault}{\mddefault}{\updefault}{\color[rgb]{0,0,0}$\dbot{\gamma'}$}%
}}}}
\put(3151,-1366){\makebox(0,0)[lb]{\smash{{\SetFigFont{8}{9.6}{\rmdefault}{\mddefault}{\updefault}{\color[rgb]{0,0,0}$\dbot{\gamma}$}%
}}}}
\put(4501,-1366){\makebox(0,0)[lb]{\smash{{\SetFigFont{8}{9.6}{\rmdefault}{\mddefault}{\updefault}{\color[rgb]{0,0,0}$\dleq$}%
}}}}
\put(5401,-1366){\makebox(0,0)[lb]{\smash{{\SetFigFont{8}{9.6}{\rmdefault}{\mddefault}{\updefault}{\color[rgb]{0,0,0}$\dleq$}%
}}}}
\put(4951,-1366){\makebox(0,0)[lb]{\smash{{\SetFigFont{8}{9.6}{\rmdefault}{\mddefault}{\updefault}{\color[rgb]{0,0,0}$\dtop{\gamma}$}%
}}}}
\put(5851,-1366){\makebox(0,0)[lb]{\smash{{\SetFigFont{8}{9.6}{\rmdefault}{\mddefault}{\updefault}{\color[rgb]{0,0,0}$\dtop{\gamma_0}$}%
}}}}
\put(2701,-1366){\makebox(0,0)[lb]{\smash{{\SetFigFont{8}{9.6}{\rmdefault}{\mddefault}{\updefault}{\color[rgb]{0,0,0}$\dleq$}%
}}}}
\put(5086,-1726){\rotatebox{90.0}{\makebox(0,0)[lb]{\smash{{\SetFigFont{8}{9.6}{\rmdefault}{\mddefault}{\updefault}{\color[rgb]{0,0,0}$\dleq$}%
}}}}}
\put(3196,-1591){\rotatebox{270.0}{\makebox(0,0)[lb]{\smash{{\SetFigFont{8}{9.6}{\rmdefault}{\mddefault}{\updefault}{\color[rgb]{0,0,0}$\dleq$}%
}}}}}
\put(4096,-1546){\rotatebox{270.0}{\makebox(0,0)[lb]{\smash{{\SetFigFont{8}{9.6}{\rmdefault}{\mddefault}{\updefault}{\color[rgb]{0,0,0}$\dstep$}%
}}}}}
\put(2251,-1366){\makebox(0,0)[lb]{\smash{{\SetFigFont{8}{9.6}{\rmdefault}{\mddefault}{\updefault}{\color[rgb]{0,0,0}$\dbot{\gamma_0}$}%
}}}}
\end{picture}%

  \end{center}
  
  Knowing $\gamma \in B(\gamma_0)$, that is, $\dbot{\gamma_0} \dleq
  \gamma \dleq \dtop{\gamma_0}$ we obtain the inequalities from the
  top line by using the idempotence and monotonicity of
  $\dbot{(.)}$, $\dtop{(.)}$ with respect to $\dleq$
  (Lemma~\ref{lemma:bounds:basic}). The same lemma ensures the
  inequalities of the bottom line.  Finally, the inequalities across
  the two lines hold because of Lemma~\ref{lemma:bounds:dstep}.  All
  over, they ensure that $\dbot{\gamma_0} \dleq \gamma' \dleq
  \dtop{\gamma_0}$ for any $d$-step $\gamma\dstep\gamma'$.
  
  (b) We define the interval of integers $K_0 \isdef [\mind
    \dbot{\gamma_0}, \maxd \dtop{\gamma_0}]$, that is, the interval of
  possible $d$-values in the configurations reachable from $\gamma_0$.
  We define the domain $D(\gamma_0) \isdef (2^\edges)^{K_0} \times
  [\sumd \dbot{\gamma_0}, \sumd \dtop{\gamma_0}]$. That is,
  $D(\gamma_0)$ consists of pairs $(\mathcal{E},s)$ where $\mathcal{E}
  : K_0 \rightarrow 2^\edges$ is a $K_0$-indexed sequence of sets of
  edges and $s$ is a bounded integer.  In particular, note that
  $D(\gamma_0)$ is finite.  We define the potential function
  $\dpot{\gamma_0} : \Gamma \rightarrow D(\gamma_0)$ by taking
  $\dpot{\gamma_0}(\gamma) \isdef ((\nsset{\gamma}{k})_{k\in K_0},
  \sumd\gamma)$.  Remark that $\dpot{\gamma_0}$ is not dependent on
  \textit{par} variables in $\gamma$.

  (c) We define the relation $\prec_d$ on $D(\gamma_0)$ by taking
  $(\mathcal{E}_1,s_1) \prec_d (\mathcal{E}_2,s_2) \isdef
  \mathcal{E}_1 \prec_{setlex} \mathcal{E}_2 \vee (\mathcal{E}_1 =
  \mathcal{E}_2 \wedge s_2 < s_1)$.  That is, $\prec_d$ is actually a
  strict lexicographic order on pairs $(\mathcal{E},s)$ which combines
  the well-founded order $\prec_{setlex}$ on finite sequences of
  finite sets and a well-founded order $<$ on bounded integers.  It
  remains to prove that $$\forall \gamma,\gamma'\in\Env, ~ \gamma \in
  B(\gamma_0) \mbox{ and } \gamma \dstep \gamma' \Rightarrow
  \dpot{\gamma_0}(\gamma') \prec_d \dpot{\gamma_0}(\gamma)$$ Let
  respectively $(\mathcal{E},s) \isdef \dpot{\gamma_0}(\gamma)$,
  $(\mathcal{E}',s') \isdef \dpot{\gamma_0}(\gamma')$. Note that from
  $\gamma \in B(\gamma_0)$ and the previous point (a) we obtain that
  $\gamma' \in B(\gamma_0)$ as well.  In particular, this ensures the
  ranks of non-smooth edges of $\gamma$, $\gamma'$ are contained in
  $K_0$ and respectively $s$, $s'$ are contained in the interval
  $[\sumd \dbot{\gamma_0}, \sumd\dtop{\gamma_0}]$.
  Lemma~\ref{lemma:dsteps} and Lemma~\ref{lemma:dsteps:smooth}($iii$)
  provide the conditions ensuring that $\dpot{\gamma_0}$ is indeed a
  decreasing potential function with respect to $\prec_d$ as expected.
  For non-smooth steps, we observe the strict inequality $\mathcal{E}'
  \prec_{setlex} \mathcal{E}$. For smooth $d$-steps we observe the
  equality ${\mathcal E} = \mathcal{E}'$ and the strict inequality $s
  < s'$. \qed
\end{proof}

\section{Conclusion}
\label{sec:conclusion}
In this paper, we provide the first constructive proof for the
convergence of the Dolev \emph{et al} BFS Spanning Tree algorithm
under the most general execution assumptions (\ie, unfair daemon,
unbounded variables).
The convergence proof has been fully formalized and automatically
checked using the PADEC framework.
Contrarily to many papers about formal certified proofs of distributed
algorithms, we do not formalize and validate an existing
proof. Rather, due to the constructive aspect of the proofs allowed in
Coq, we had to develop a new proof: we have defined a novel potential
function, allowing a finer comprehension of the system executions
towards terminal configurations.
We believe that, even though the algorithm time complexity is
exponential, this potential function will open the door to a tighter
complexity analysis.

\bibliographystyle{splncs04}
\bibliography{biblio}

\end{document}